\theoremstyle{plain}
\newtheorem{ax}{\protect\axiomname}
\theoremstyle{plain}
\newtheorem{thm}{\protect\theoremname}
\providecommand{\axiomname}{Axiom}
\providecommand{\theoremname}{Theorem}
\begin{document}
\begin{doublespace}
\begin{center}
\textbf{\large{}For Whom the Bell (Curve) Tolls: A to F, Trade Your
Grade Based on the Net Present Value of Friendships with Financial
Incentives}{\large\par}
\par\end{center}

\begin{center}
\textbf{Ravi Kashyap }
\par\end{center}

\begin{center}
\textbf{SolBridge International School of Business / City University
of Hong Kong }
\par\end{center}

\begin{center}
Oct-25-2017
\par\end{center}

\begin{center}
Keywords: Grade Trade; Pollution; Emission; Finance; Market; Friendship;
Uncertainty; Social Science; Time; Net Present Value 
\par\end{center}

\begin{center}
JEL Codes: G1 General Financial Markets; G2 Financial Institutions
and Services; D81 Criteria for Decision-Making under Risk and Uncertainty
\par\end{center}

\begin{center}
\textbf{\textcolor{blue}{\href{https://doi.org/10.3905/jpe.2019.22.3.064}{Edited Version: Kashyap, R. (2019). For Whom the Bell (Curve) Tolls: A to F, Trade Your Grade Based on the Net Present Value of Friendships with Financial Incentives. The Journal of Private Equity, 22(3), 64-81. }}}
\par\end{center}

\begin{center}
\tableofcontents{}\pagebreak{}
\par\end{center}
\end{doublespace}
\begin{doublespace}

\section{Abstract}
\end{doublespace}

\begin{doublespace}
\noindent We discuss a possible solution to an unintended consequence
of having grades, certificates, rankings and other diversions in the
act of transferring knowledge; and zoom in specifically to the topic
of having grades, on a curve. We conduct a thought experiment, taking
a chapter (and some more?) from the financial markets, (where we trade
pollution and what not?), to create a marketplace, where we can trade
our grade, similar in structure to the interest rate swap. We connect
this to broader problems that are creeping up, unintentionally, due
to artificial labels we are attaching, to ourselves. The policy and
philosophical implications of our arguments are to suggest that all
trophies that we collect (including certificates, grades, medals etc.)
should be viewed as personal equity or private equity (borrowing another
widely used term in finance) and we should not use them to determine
the outcomes in any selection criteria except have a cutoff point:
either for jobs, higher studies, or, financial scholarships, other
than for entertainment or spectator sports. We suggest alternate methods
for grading and performance assessment and put forth tests for teaching
and learning similar to the Turing Test for intelligence.
\end{doublespace}
\begin{doublespace}

\section{\label{sec:Good-Intentions,-Bad}Good Intentions, Bad Consequences
... Ugly Repercussions}
\end{doublespace}

\begin{doublespace}
\noindent Stating that we have all been students at some point in
our lives would be an assumption not far from reality. Here, we are
referring to students in the currently understood sense of the word,
as of the 21st century, who wish to get good grades since we seem
to think that there is a high correlation between grades and employment
opportunities, scholarships, further academic pursuits, financial
assistance and many other benefits (Jones \& Jackson 1990; Loury \&
Garman 1995; Rumberger \& Thomas 1993; French, Homer, Popovici \&
Robins 2015; Daley \& Green 2014; End-note \ref{Student}).

\noindent We discuss a possible solution to an unintended consequence
of having grades (and related artifacts such as college diplomas or
course completion certificates), which is that students are more focused
on grades rather than on learning, which is the true purpose of being
students. Another unintended consequence which has immediate repercussions,
to both students and educational institutions is that certain students
who are receiving financial scholarships might end up losing their
funding if they fail to obtain a certain minimum level of grades:
(Johnstone 2004) is a discussion of the shift in at least part of
the higher educational cost burden from governments, or taxpayers,
to parents and students, which has necessitated that students look
out for different financial assitance schemes to provide for their
higher education costs; (Monks 2009) finds that merit aid has a statistically
significant but inelastic effect on enrollment of extremely high ability
students; (Henry, Rubenstein \& Bugler 2004) study students just above
the eligibility threshold for assistance and find that losing scholarships
may substantially reduce any potential positive effects of receiving
it in the first place, suggesting that efforts to increase the number
of students who retain the scholarships should be a major focus of
future policy initiatives.

\noindent (Carruthers \& Özek 2016) find that losing one’s scholarship
results in a small degree of detachment from college, though thankfully,
no effect on timely completion in majority of the cases; (Gross, Hossler,
Ziskin \& Berry 2015) find no relationship between institutional merit
aid and student departure, instead need-based aid was consistently
related to decreased odds of departure; (Zhang \& Ness 2010) find
that state merit scholarship programs do indeed stanch the migration
of “best and brightest” students to other states; (Sjoquist \& Winters
2015) find that exposure to state merit aid programs have no meaningfully
positive effect on college completion; also see: (Cohen-Vogel, Ingle,
Levine \& Spence 2008; Henry \& Rubenstein 2002; Cornwell, Mustard
\& Sridhar 2006 for the pros, cons and other aspects of merit based
financial aid; End-note \ref{Student Financial Aid}).

\noindent \textbf{\textit{Giving scholarships based on grades on a
curve can be one way to provide an incentive for students to maintain
high scores (significant debate exists on the use of grades, with
or without a curve, as incentives to learn: }}\textbf{Stan 2012; Betts
\& Grogger 2003; Kulick \& Wright 2008}\textbf{\textit{; }}\textbf{Grant
\& Green 2013}\textbf{\textit{; Figlio \& Lucas 2004). As long as
students are not too caught up on their grades, (easier said than
done, but we discuss ways that could accomplish this in later sections)
scholarships built upon a grade curve can be a decent decision making
criteria of handing out limited resources, which in this case are
monetary funds available, to the large body of students that seek
assistance. Also, grading on a curve could be one way mitigate the
effects of grade inflation (Chan, Hao \& Suen2007; Johnson 2006; Eiszler
2002; Kohn 2002; Sabot \& Wakeman-Linn 1991; Zangenehzadeh 1988; End-note
\ref{enu:Grade-inflation-is}).}}

\noindent There has been significant debate regarding the usage of
the Bell Curve in societal contexts: Murray, C., \& Herrnstein 1994
is a highly controversial book titled the Bell Curve. The book's title
comes from the bell-shaped normal distribution of intelligence quotient
(IQ) scores in a population; the authors argue that human intelligence
is substantially influenced by both inherited and environmental factors
and that it is a better predictor of many personal dynamics, including
financial income, job performance, birth out of wedlock, and involvement
in crime than are an individual's parental socioeconomic status. Kincheloe,
Steinberg \& Gresson 1997 is a collection of essays criticizing this
book. Sternberg 1995; Ma \& Schapira 2017 are other reviews on this
topic that highlight the fact that the factors affecting IQ are still
poorly understood. Beardsley 1995; Jacoby \& Glauberman 1995; Fendler
\& Muzaffar 2008 consider the historical aspects of the bell curve,
both the concept and the above mentioned book, and highlight the possibility
that history is usually intertwined with politics. End-note \ref{enu:An-intelligence-quotient}.

\noindent 
\begin{figure}
\includegraphics[width=17.5cm]{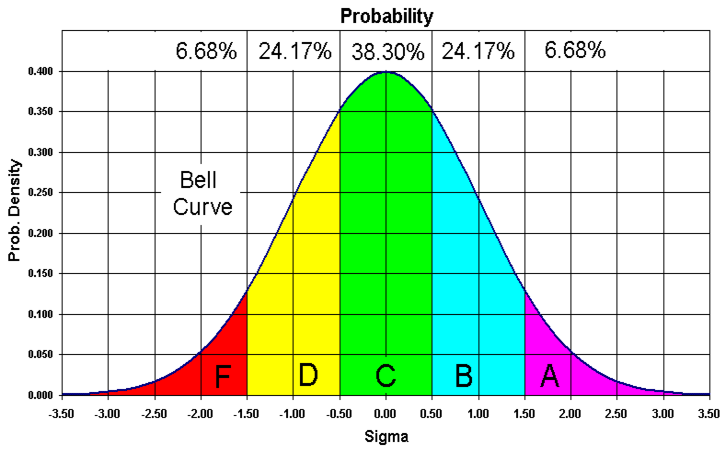}

\caption{\label{fig:A-to-F:}A to F: For Whom the Bell (Curve) Tolls ...}
\end{figure}

\noindent The issue of students losing assistance is exacerbated (or,
perhaps, even caused) due to guidelines (more of a law, since they
need to be followed), which prescribe that students must be graded
on a curve (Fendler \& Muzaffar 2008; Taylor 1971; Reeves 2001; End-note
\ref{Grade Curve}; Figure \ref{fig:A-to-F:}), which limits the number
of students that can obtain top grades. (Durm 1993) is a summary of
the history of how grading policies developed and the assumptions
that might have led to academe creating its own nightmare; this is
covered in greater detail in (Finkelstein 1913; Smallwood 1969; Cureton
1971; Schneider \& Hutt 2014). There is a good amount of research
on grading policies, whether it is a good practice, potential obstacles
to reform, and other aspects of classroom performance and related
consequences: (Zeidner 1992; Polloway, Epstein, Bursuck, Roderique,
McConeghy \& Jayanthi 1994; Stiggins, Frisbie \& Griswold 1989; Pintrich
\& De Groot, 1990; Harackiewicz, Barron, Carter, Lehto \& Elliot 1997;
Barron \& Harackiewicz 2003; Tomlinson 2005; Guskey 1994; 2011).

\noindent \textbf{\textit{We want to emphasize, that these are unintended
consequences, since the establishment of universities, educational
foundations, introduction of grades, or, having grades on a curve
are wonderful innovations, done with honorable intentions, that might
have been fabricated to solve certain other problems. One reason,
why such unwanted outcomes creep up, is because, we live in a world
that requires around 2000 IQ points, to consistently make correct
decisions; but the smartest of us has only a fraction of that (Ismail
2014; End-note \ref{enu:Taleb and Kahneman discuss Trial and Error / IQ Points}).
Hence, we need to rise, above the urge to ridicule, the seemingly
obvious blunders of others, since without those marvelous mistakes,
the path ahead will not become clearer for us.}}

\noindent Another more unwelcome unintended consequence of having
grades on a curve, is perhaps that students end up competing, with
one another (Bell, Grekul, Lamba, Minas \& Harrell 1995 found that
students were most likely to help when the other students were friends,
when there was frequent contact, and when the test was not graded
on a curve, that is absence of competition; Schneider \& Hutt 2014
is a historical interpretation, origins, uses and evolution, of grades;
also see: Schinske \& Tanner 2014; Brookhart etal 2016; Bresee 1976;
Natriello 1987; Aviles 2001). While a competitive spirit, is to be
encouraged; an overtly competitive environment destroys the spirit
of co-operation among the members of the classroom, who are all travelers
on the same journey of learning since they learn not only from the
instructors, but perhaps, more from one another.

\noindent We conduct a thought experiment, taking a chapter (and some
more?) from the financial markets, (where we trade pollution and what
not?), to create a marketplace, where we can trade our grade, similar
in structure to the interest rate swap (section \ref{sec:The-Grade-Trade}).
We want to emphasize that this experiment is meant to be merely hypothetical
at this stage. Most importantly we highlight the practical obstacles
and many legal and moral implications of even considering such an
approach. The benefit of this thought excercise is that it possibly
provides solutions to the unintended consequences of having grades,
certificates, rankings and other diversions in the act of transferring
knowledge; and zoom in specifically to the topic of having grades,
on a curve. We connect our discussion to broader problems that are
creeping up, unintentionally, due to artificial labels we are attaching,
to ourselves.

\noindent The policy and philosophical implications of our arguments
are to suggest that all trophies that we collect (including certificates,
grades, medals etc.) should be viewed as personal equity or private
equity (borrowing another widely used term in finance) and we should
not use them to determine the outcomes in any selection criteria except
have a cutoff point: either for jobs, higher studies, or, financial
scholarships, other than for entertainment or spectator sports. Given
our fascination with testing and measurement, we suggest alternate
methods for grading (sections \ref{sec:Infinite-Progress-Benchmark},
\ref{subsec:Risk-Management-of}) and put forth tests for teaching
and learning similar to the Turing Test for intelligence (section
\ref{sec:Turing-Tests-for}).

\noindent Lastly, it would not be entirely incorrect to state that
the majority of the attempts at evaluating students, and perhaps even
most (all?) of knowledge creation, starts with answering questions.
In present day society, we seem to be focused on answering questions
that originate in different disciplines. Hence, as a first step, we
recognize that one possible categorization of different fields can
be done by the set of questions a particular field attempts to answer.
Since we are the creators of different disciplines, but not the creators
of the world in which these fields need to operate (based on our present
understanding of our role in the cosmos, which might very well change),
the answers to the questions posed by any domain can come from anywhere
or from phenomenon studied under a combination of many other disciplines.

\noindent Hence, the answers to the questions posed under the realm
of education or knowledge transference can come from seemingly diverse
subjects, such as, physics, biology, mathematics, chemistry, marketing,
economics, finance and so on. This suggests that we might be better
off identifying ourselves with problems and solutions, which tacitly
confers upon us the title Problem Solvers, instead of calling ourselves
teachers, professors, physicists, biologists, psychologists, marketing
experts, economists and so on. This quest for answers is bounded only
by our imagination (Calaprice 2000). Our paper then becomes an example
of how concepts from finance and trading, can be used to solve problems
in education with many connections to deeper dilemmas in society.
\end{doublespace}
\begin{doublespace}

\section{\label{sec:The-Grade-Trade}The Grade Trade in a Light Pool}
\end{doublespace}

\begin{doublespace}
\noindent Our innovation is to use the financial markets to facilitate
trades on grades (End-note \ref{Grade-Trade-Market}). Students that
have good scores or higher grades, can trade their grade with someone
that might end up losing their scholarships, or other opportunities,
because of missing out the grades they need. We call this marketplace
for grades, ``A Light Pool'', in contrast to dark pools used for
financial instruments since we would aim for complete transparency
and also because knowledge represents light (Mittal 2008; Domowitz,
Finkelshteyn \& Yegerman 2008; Ganchev, Nevmyvaka, Kearns \& Vaughan
2010; Zhu 2014 explain dark pools in finance and their potential benefits
and possible harmful aspects; End-note \ref{Dark Pool}; Kvanvig 2003;
Pritchard 2009; Kashyap 2017c are discussions about  the value of
knowledge and the long held beliefs regarding higher powers in the
knowledge realm; also see: Turner \& Coulter 2001; Hallam 1996; Ludvik
2007; End-notes \ref{Knowledge Deities}, \ref{Knowledge}; Dancy,
Sosa \& Steup 2009; DeRose 2005; Figueroa 2016; Pritchard 2018; Hetherington
2018; End-note \ref{Epistemology} are an excellent collection of
articles on leading theories, thinkers, ideas, distinctions, central
questions and concepts in epistemology).

\noindent The person who is giving up the higher grades, in return,
benefits from the following items: 1) They earn the good will and
friendship of the person who would have lost out a much needed source
of money (Berndt 2002 discuss the relationship between the quality
of friendships and social development; End-note \ref{Friend Indeed}).
2) They can get a certain percentage of the funds that might have
been lost without the trade. 3) The grade swap will be reversed when
there is no benefit to the person receiving the higher grade in the
original trade; perhaps even with another financial component as agreed
upon initially when the first trade is made. The structure, which
is discussed in detail below, is given in Figure \ref{fig:Grade-Trade:-Similar}.

\noindent 
\begin{figure}
\includegraphics[width=17.5cm]{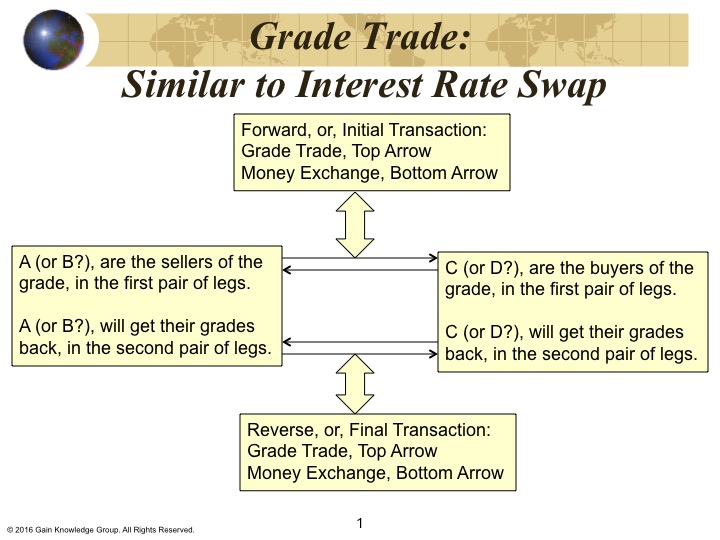}

\caption{\label{fig:Grade-Trade:-Similar}Grade Trade: Similar to Interest
Rate Swap}

\end{figure}

\noindent To ensure fairness, we set a guideline, based on a recent
example (as of October 25, 2017) set by Facebook founder Mark Zuckerberg,
who has pledged, on Dec 1, 2015, to give away 99\% of his wealth to
advancing human potential (Schervish, Davis, Cosnotti \& Rosplock
2016; Chua, Aricat \& Goh 2017; End-note \ref{Zuckberg 99=000025, Human Potential Facebook Link},
\ref{Zuckerberg 99=000025 News Links}); technically, 99\% of his
Facebook shares will be put into a legal entity, a limited liability
company and not a charity, which would be owned and controlled by
Zuckerberg. As a private company, this initiative can spend its money
on whatever it wants, including private, profit-generating investment
(Aron 2016; Clark \& McGoey 2016; End-note \ref{Zuckerberg 99=000025 in Legal Entity}).
His net-worth at that time was around, \$45 Billion USD. Hence, we
suggest that the trade for a grade, should involve a financial component,
that is less than 1\% of the scholarship amount. Surely, if someone
with \$45 Billion USD can give away 99\% of it. Then anyone, with
much less than \$45 Billion, (we do not recollect any financial scholarship
to any institution, let alone to any individual student, being more
than or equal to this amount) can afford to give away something much
less than that one percentage amount.

\noindent We would again like to highlight that, this less than 1\%
suggestion is not to poke fun at Mark Zuckerberg for keeping a few
hundred million before giving out everything else. He has every right
to keep everything he has earned. His earnings are surely much lesser
in value compared to the wonderful legacy he has established, which
connects most of humanity, as we know it, with one other. We merely
use this example to come up with a number that can act as a upper
bound for the financial component to this grade trade. 

\noindent We further provide theoretical and mathematical justifications
based on three axioms (End-note \ref{enu:The-reason-Rationalizations}):
\end{doublespace}
\begin{enumerate}
\begin{doublespace}
\item The value of grades generally decline, as time passes (Figure \ref{fig:Exponential-Decay:-Time}).
\item The value of friendships usually rallies, as time passes (Figure \ref{fig:Exponential-Growth:-Time}).
\item The value of money, can increase, as time passes, even after considering
inflation, provided, judicious investments are made (Figure \ref{fig:Exponential-Growth:-Time}).
\end{doublespace}
\end{enumerate}
\begin{doublespace}
\noindent 
\begin{figure}
\includegraphics[width=8.75cm]{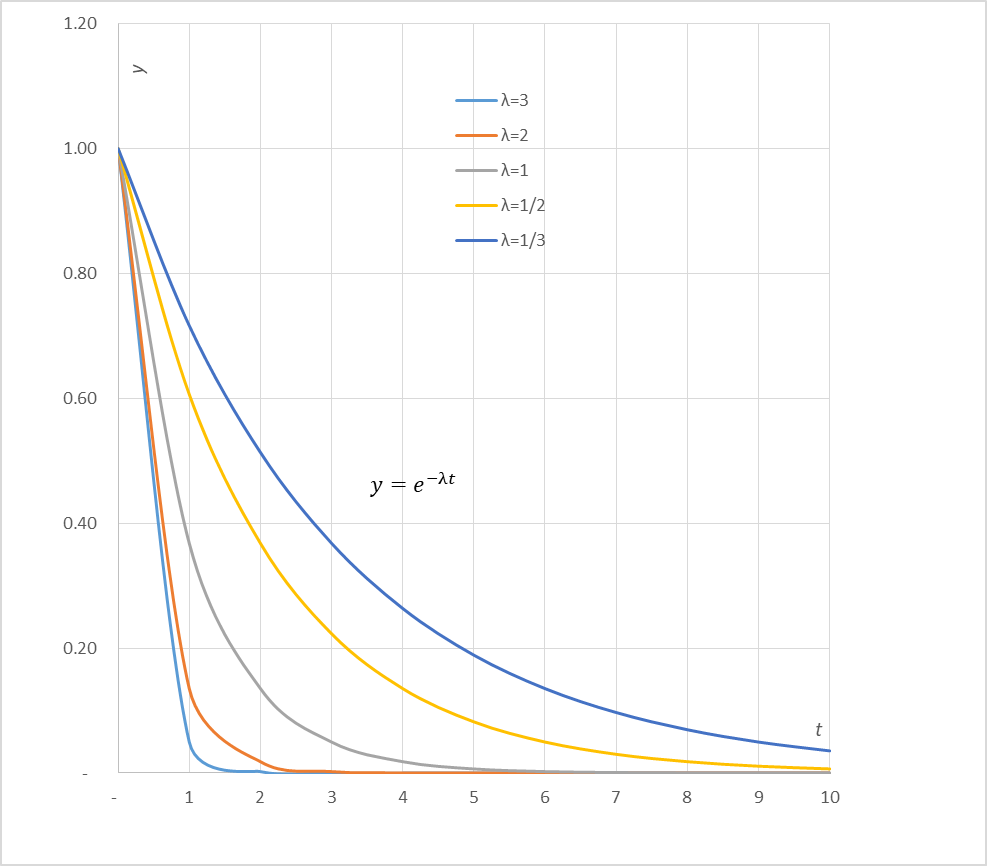}\includegraphics[width=8.75cm]{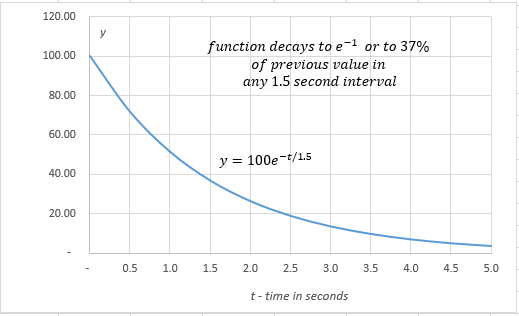}\linebreak{}
\includegraphics[width=17.5cm]{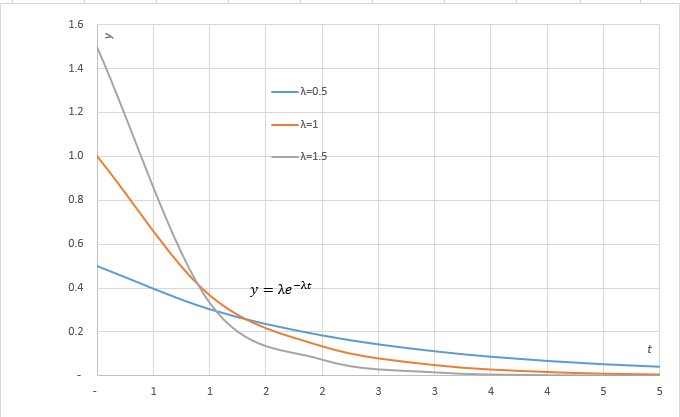}

\caption{\label{fig:Exponential-Decay:-Time}Exponential Decay: Time Value
of Grades}

\end{figure}

\noindent 
\begin{figure}
\includegraphics[width=8.75cm]{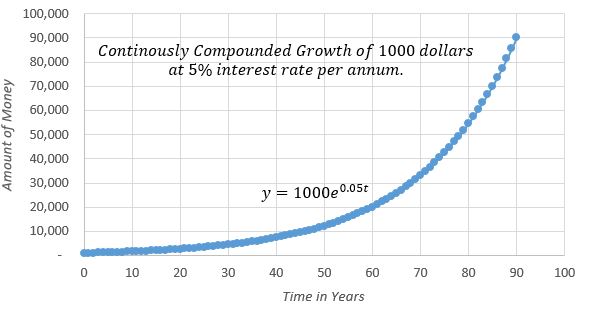}\includegraphics[width=8.75cm]{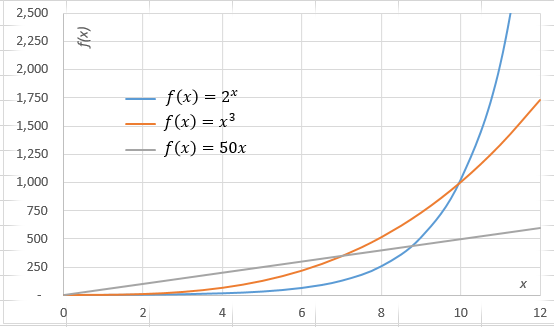}

\caption{\label{fig:Exponential-Growth:-Time}Exponential Growth: Time value
of Money or Friendship}

\end{figure}

\noindent If we have trouble accepting the first two axioms, we must
question our moral compass. If we are judging a person, based on grades
received long back; is that a good criteria we are using for evaluating
someone? If friendships don't grow over time (or, somehow, become
less), should we not examine, whether we as a society, are becoming
too caught up with ourselves, (our lives perhaps) to ignore (or wrong
or begrudge or resent) the ones that have been a part of our lives,
for long periods of time? Hence the NPV (Net Present Value: Ross,
Westerfield \& Jaffe 2002; End-note \ref{Net Present Value}) of any
grade trade is significantly higher for the person who is giving up
the higher grade to make a friend, who is likely to wish him well
for the rest of his life and also get some monetary benefit, which
is going to grow, unless the federal reserve, or other central banks,
intervene to set zero interest rates. The third axiom has countless
papers discussing it, hence we do not consider that in detail here
(Petters \& Dong 2016 has a comprehensive discussion of the time value
of money and all the related paraphernalia; also see: Cochrane 2009;
Bierman Jr \& Smidt 2012).
\end{doublespace}
\begin{doublespace}

\subsection{Some More Simple Lessons from a Seemingly Complex Marketplace}
\end{doublespace}

\begin{doublespace}
\noindent We collect a few more lessons from the financial markets
for our system (which, could be a computer software solution as well),
that is meant for both trading and grading. If we can trade a stock
or financial security hundreds of times in a day (or, perhaps in a
millisecond, since it is becoming harder to keep track of the high
frequency with which, trades happen these days, in certain venues),
we don't have to restrict our grade trade to happen just once. The
grades can (or will?) be traded back; in fact, once the necessity
of the grade for the person who would miss an opportunity, financial
or otherwise, has disappeared, the grades can, (should?) be reversed
back. When the reverse swap happens, we need to make sure, that the
financial component obeys any net present value constraints, or, the
time value of money rules; this would be in comparison to the financial
component in the first trade.

\noindent Also, as with any trading, it would only be proper, to ensure
that some licensing requirements are met. There are many regulatory
bodies in the financial industry, which provide licenses allowing
the practice of different types of regulated activities. Some examples
of these licenses in financial services are: Series 7, SFC Type 4,
and so on (Warschauer 2002; Goetz, Tombs \& Hampton 2005 are discussions
about how colleges can prepare students to get licenses as financial
advisers or practitioners, aiding with an early transition into such
careers; clearly in our case there will need to be involvement from
universities or educational institutions; End-note \ref{List of Securities Examinations};
Egan, Matvos \& Seru 2016 find that misconduct is very prevalent among
financial advisers). The license to trade will be granted only to
students, that can provide proper documentation, establishing their
requirement for maintaining, a certain level of grades, who will be
the buyers in the first round of trades on grades; and to all students
with higher grades, since they would be the sellers in the first round.
When the grades are reversed of course the roles would be exchanged.
To be more precise in the use of financial terminology, we could term
this a grade swap and the mechanics of how it would work, would have
lots of similarities, with an interest rate swap (Hull \& Basu 2016;
Tuckman\& Serrat 2011; Darbyshire 2016; End-note \ref{Interest Rate Swap}).
\end{doublespace}
\begin{doublespace}

\subsection{Luminaries in the Light Pool}
\end{doublespace}

\begin{doublespace}
\noindent It would seem on the surface, before actually engaging in
establishing this market and seeing how this experiment would play
out, that there would many buyers for high grades, but perhaps less
students willing to lower their grades. Hence, in addition to the
two benefits they would receive, the school could provide a certificate
stating that this person was willing to give up something, very precious
in the perception of most, to aid another fellow student, and in the
process, displayed exemplary humanitarian ideals (perhaps, no less
than blood or organ donors, who save the lives of others in need:
Becker \& Elias 2007; Erin \& Harris 2003; Barnett, Blair \& Kaserman
1992 are discussions about markets for human organs).

\noindent It is very likely that this market for trading grades might
raise some ethical questions regarding whether it is right, or, wrong,
to allow trades on grades. But surely, in a world where we allow emissions,
or, pollution to be traded (Drury, Belliveau, Kuhn \& Bansal 1998;
Burtraw, Evans, Krupnick, Palmer \& Toth 2005; Demailly \& Quirion
2008; Fisher-Vanden \& Olmstead 2013), allowing trades on grades,
only when someone badly needs a higher grade to survive and complete
their education, must be an acceptable option. If not, have we completely
lost our sense of justice (and our minds too?) or have we no idea
or clue about what we should or should not be trading?

\noindent Another dilemma is whether, corporations would view this
as a sort of deception being used by universities to send them inferior
candidates. For this we have two refutations: as mentioned, the grades
can be reversed back before graduation; also, employers should not
be overtly concerned about grades. All corporations have extensive
training programs to coach their employees for the duties they would
need to perform. If this becomes such a sticking point, which seems
unlikely, but just for completeness we mention that perhaps, corporations,
also other institutions, need to relax their requirements about using
grades as selection criteria. This is especially important for internships,
since the grades might not yet have been reversed. We consider the
topic of education as a signal of better skills later in section \ref{sec:The-Crux-of}.

\noindent A much more effortless solution than trading grades would,
of course, be to eliminate grading based on curves, eliminate grades
entirely or not have such a strong link between grades and financial
assistance. We could also cultivate an environment where grades are
not taken too seriously, but are viewed simply as the outcome in any
competitive spectator sport, wherein, we expect to have winners and
losers; we simply need to await the next round of games, while reminding
ourselves that winners can take all in some scenarios (Frank \& Cook
2010; Garcia‐del‐Barrio \& Pujol 2007; Leeds \& Kowalewski 2001; Whannel
2005; Frank 1999 discuss the winner take all effect in sports and
other areas of society including higher education applications, which
unfortunately are quite sensitive to rankings. Though this effect
might be okay in some situations, it does cause deep rifts in segments
of society; our core message of ignoring artificial categorizations
is likely to produce a tempering effect). Alternately, we could remind
Mark Zuckerberg, that his honorable ambition of advancing human potential
includes ensuring that humans who have a need are to be taken care
of. Because, if we take care of people they will advance their own
potential and take care of the world (highly likely, that he might
have overlooked this facet of human capability, or, we might have
missed the connection between his dreams and our conundrums).
\end{doublespace}
\begin{doublespace}

\subsection{\label{subsec:Risk-Management-of}Risk Management of Structured Products
\& Grade Payoffs}
\end{doublespace}

\begin{doublespace}
\noindent A picture is worth a thousand words; and a picture with
a few words, must be worth much more? (Figures \ref{fig:Using-Financial-Derivatives},
\ref{fig:Payoff-Diagram-of}). If we really need to have grades, we
suggest one approach to combine, the exam scores, towards a final
grade (better termed, the overall grade, since we might allow students
to retake the exams, so they can improve their scores. They can retake
the exams, when the course is offered next {[}semester or year{]},
so that the overhead, or, the cost of administering the exam, is minimal).
Here, we suggest the use of Structured Products (common in Finance,
Kat 2001) to determine the weight of various component exams towards
the final grade. Introducing a random component to the weights, and
some complex notation, means that perhaps, we will realize that no
matter what the weight, if we study the material well, it will be
reflected in our grades. Also, the one place, many of us, are likely
to notice notation (which is easily overlooked in books, papers and
presentations), is with regards, to how exam weights are calculated.

\noindent If we cannot completely do away with grades, then perhaps
an alternate approach for a grade construction scheme can be,
\end{doublespace}
\begin{enumerate}
\begin{doublespace}
\item Attendance, assignments and participation, should guarantee a C+ in
tough courses. 
\item The mid-term and the finals (plus other components) can determine
whether someone will get A+ or B+. 
\item The percentage weight of the mid-term examination, M\%, to the overall
grade can be given by expression \ref{eq: Mid-Term Weight} below. 
\item The percentage weight of the final examination to the overall grade
can be given by expression \ref{eq:Final Weight} below. 
\item Here, $X$ is the volatility of the mid-term scores, of the exam taking
members. The expressions \ref{eq: Mid-Term Weight} and \ref{eq:Final Weight}
are aimed at ensuring that students pay attention to the principles
and usage of logarithms and min / max functions. This can also be
helpful to understand how payoffs of many financial products can be
made to depend on variables that are not seemingly related to the
financial markets and how creativity is important for the creation
of structured products. It is important to keep in mind that many
alternate expressions can be used to obtain different types of payoffs.
\item It should be clear that the below is a very specific example, wherein
the weight of the mid-term score changes between 0 to 15\%; and the
final score weight changes as the difference between 40\% and the
mid-term weight. The extent to which the score weights will change
depends on the performance of the entire class, more specifically,
it is based on the volatility of the scores (Broadie \& Jain 2008;
Javaheri, Wilmott \& Haug 2004 discuss derivative products that depend
on the volatility or the variance, of certain assets. In this case,
the scores are given to human assets).
\begin{equation}
\min\left(15,\max\left(0,\frac{10^{\left\{ \sqrt{\ln\left(e^{4}\right)}\right\} }}{\left\{ \sqrt{\ln\left(e^{25}\right)}\right\} }-X\right)\right)\label{eq: Mid-Term Weight}
\end{equation}
\begin{equation}
40-\min\left(15,\max\left(0,\frac{10^{\left\{ \sqrt{\ln\left(e^{4}\right)}\right\} }}{\left\{ \sqrt{\ln\left(e^{25}\right)}\right\} }-X\right)\right)\label{eq:Final Weight}
\end{equation}
\item Another example of using random variables as exam or assignment score
weights could be more directly related to finance. If the students
are managing an investment portfolio for an investment analysis course,
then the weights of midterms, assignments or finals could depend on
the volatility of the portfolio value or the P\&L (profit and loss).
Such an approach could mean that the weights might be unique for each
individual or group in the course, leading to the scores being directly
tied to their own destiny or the matter is much more in their own
hands than otherwise. This is equivalent to each individual or group
picking random numbers (surely an alternative for courses where there
is no portfolio being managed), which would then become inputs into
the structured product created to give the weights for their scores.
\end{doublespace}
\end{enumerate}
\begin{doublespace}
\noindent 
\begin{figure}
\includegraphics[width=17.5cm]{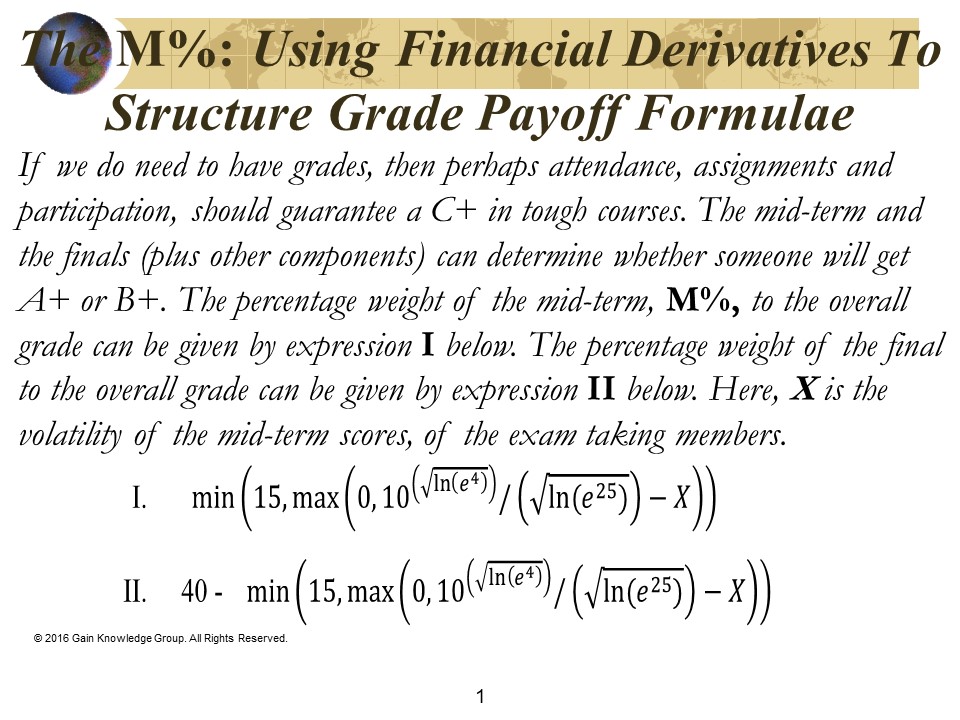}

\caption{\label{fig:Using-Financial-Derivatives}Using Financial Derivatives
to Structure Grade Payoff Formula}
\end{figure}

\end{doublespace}

\begin{figure}
\includegraphics[width=12cm]{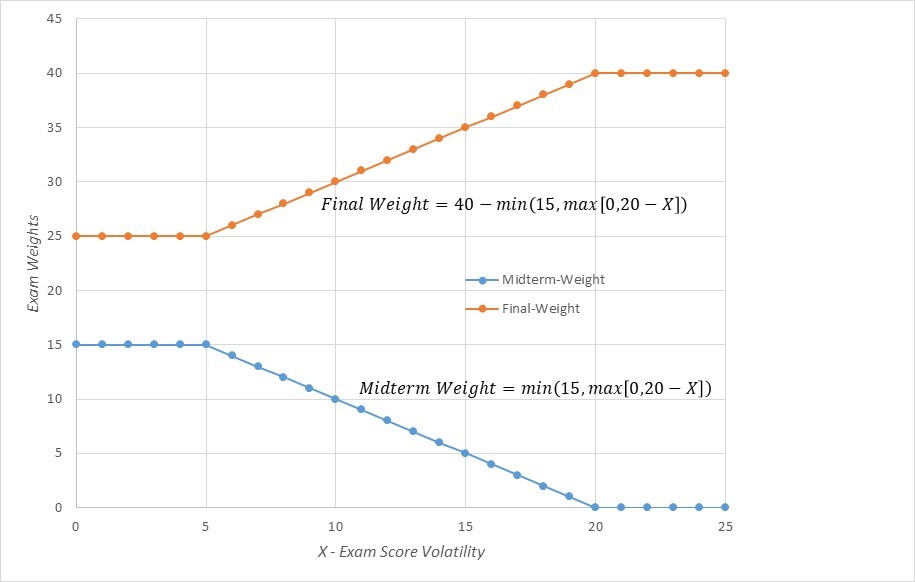}\includegraphics[width=5cm]{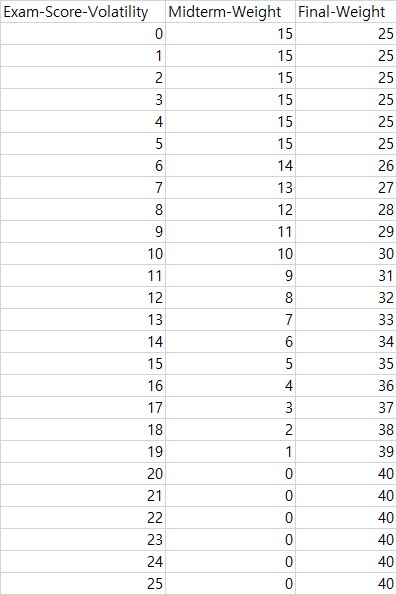}

\caption{\label{fig:Payoff-Diagram-of}Payoff Diagram / Table for Exam Structured
Products in Figure \ref{fig:Using-Financial-Derivatives}}

\end{figure}

\begin{doublespace}

\section{\label{sec:Infinite-Progress-Benchmark}Infinite Progress Benchmark}
\end{doublespace}

\begin{doublespace}
\noindent Another more rigorous mathematical justification to eliminate
grading and other methods of labeling learning is outlined using an
axiom and a theorem that follows from it below.
\end{doublespace}
\begin{ax}
\begin{doublespace}
\noindent The benchmark to be successful in a course is to have made
infinite progress. 
\end{doublespace}
\end{ax}
\begin{doublespace}
\noindent Surely, anyone who has made such a tremendous amount of
progress should not just be deemed to have completed the course, but
to have completed the course with the highest possible distinctions
or with an \textbf{A} grade say. 
\end{doublespace}
\begin{thm}
\begin{doublespace}
\noindent To make a significant amount of progress, in anything we
do, we just need to recognize that anyone who has taken one step forward
has made an infinite percentage change from their starting position,
which we label as zero. It is safe to assume that infinite progress
is significant progress.
\end{doublespace}
\end{thm}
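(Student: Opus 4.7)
The plan is to reduce the claim to a one-line computation of percentage change, and then invoke the preceding axiom. First I would formalize the setting: let $s_0$ denote the starting position and $s_1$ the position after one step. The paper tells us to \emph{label} the starting position as zero, so I set $s_0 = 0$, and I take $s_1 > 0$ to encode the notion of a step \emph{forward}. The quantity we care about is the percentage change $P = (s_1 - s_0)/s_0$, which is the standard yardstick of progress in virtually every quantitative discipline (finance, economics, demography, etc.), so adopting it here needs no extra justification.

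Next I would evaluate $P$ under the substitution $s_0 = 0$ and $s_1 > 0$. The formal computation gives $P = s_1/0$, which I would interpret via the one-sided limit $\lim_{s_0 \downarrow 0} (s_1 - s_0)/s_0 = +\infty$. So the percentage change associated with taking one forward step from a zero baseline is infinite. This is exactly the hypothesis of the axiom, which declares the benchmark for course success to be \emph{infinite progress}; hence one forward step already meets the benchmark, and the final sentence of the statement (``infinite progress is significant progress'') is immediate, since anything infinite dominates any finite threshold of significance we might have preset.

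The main obstacle, and the only real subtlety, is the division-by-zero step. A pedantic reader will object that $s_1/0$ is undefined, not infinite, and that the whole argument trades on a convention rather than a theorem. I would address this in two ways: (i) by the limit interpretation above, which is the standard way applied sciences treat percentage change off a vanishing base, and (ii) by noting that the axiom itself licenses us to \emph{label} the starting point as zero, so any ambiguity is a feature of the chosen reference frame rather than of the underlying progress. A secondary, expository obstacle is to keep the tone consistent with the rest of the paper: the proof should read as a deliberate reductio of grading conventions that rely on percentage-based thresholds applied to an ill-defined baseline, rather than as a straight-faced theorem in real analysis.

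I would close by remarking that the argument is robust to the size of the step: replacing ``one step'' by any $\varepsilon > 0$ yields the same infinite percentage change, so the conclusion does not depend on how ambitious the learner is, only on the fact that they moved at all. This reinforces the paper's broader policy point that any progress from a zero baseline deserves the top mark, which in turn undercuts the rationale for curve-based grading that the earlier sections critique.
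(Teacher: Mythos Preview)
Your proposal is correct and matches the paper's own proof almost exactly: both compute the percentage change from a zero baseline and resolve the division-by-zero issue via the one-sided limit $\lim_{a\to 0^{+}}(1-a)/a=+\infty$. The only cosmetic difference is that the paper fixes the step as a unit increment while you allow an arbitrary $s_1>0$ and note the robustness, which is a harmless (and arguably nicer) generalization.
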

\begin{proof}
\begin{doublespace}
\noindent Since the starting point is zero and the first step denotes
a unit increment, the percentage change becomes, 
\[
\%\text{ Change from Zero to One}=\frac{1-0}{0}=\infty
\]
To be precise, we need to express this as, 
\[
\%\text{ Change from Zero to One}=\underset{a\rightarrow0^{+}}{\lim}\frac{1-a}{a}=+\infty
\]
Here, $a$ approaches zero from the positive real number axis, denoted
as $a\rightarrow0^{+}$. This completes the proof (End-note \ref{Division-by-Zero}).
\end{doublespace}
\end{proof}
\begin{doublespace}
\noindent If we think of any course as a mountain and the enrolled
students as hikers seeking to climb to the summit, then as the course
progresses and a certain amount of time has elapsed, someone would
have climbed higher than others due to innate ability, or due to better
practice either as the course goes on or some familiarity with the
concepts from before the course started. Here the analogy is that
someone who has climbed higher is to be awarded a better grade. We
need to recognize that our real competition is not with anyone else,
but with ourselves. Our real goal from any learning objective (or
any endeavor) is that we are better than our own past selves.

\noindent If we have moved forward compared to where we were when
we started, we have made infinite progress and it deserves its due
recognition. If a student has learned one new concept or idea in a
course, they have fulfilled the criteria for infinite progress. If
this seems like a rather easy condition, we need to remind ourselves
that over the course of an entire formal educational cycle many of
us might not even remember the names of the courses we have taken,
let alone mastering all the material we have studied. If we learn
one concept thoroughly in a course, that can be a noteworthy accomplishment.
Hence, if everyone is making infinite progress, their performance
is equivalent in one particular way and needs to be acknowledged accordingly.
\end{doublespace}
\begin{doublespace}

\section{\label{sec:Turing-Tests-for}Turing Tests for Learning and Teaching}
\end{doublespace}

\begin{doublespace}
\noindent The Turing Test (TT) developed by Alan Turing (Turing 1950;
French 2000 chronicles the comments and controversy surrounding the
first fifty years of the TT; End-note \ref{enu:The-Turing-test}),
is a test of a machine's ability to exhibit intelligent behavior equivalent
to, or indistinguishable from that of a human. Turing proposed that
a human evaluator would judge natural language conversations between
a human and a machine designed to generate human-like responses. The
evaluator would be aware that one of the two partners in conversation
is a machine, and all participants would be separated from one another.
The conversation would be limited to a text-only channel such as a
computer keyboard and screen so the result would not depend on the
machine's ability to render words as speech (Turing originally suggested
a teleprinter, one of the few text-only communication systems available
in 1950). If the evaluator cannot reliably tell the machine from the
human, the machine is said to have passed the test. The test does
not check the ability to give correct answers to questions, only how
closely answers resemble those a human would give.

\noindent In an educational setting, we can devise a Turing Test for
Learning (TTL) similar to the TT, that would be useful for conducting
examinations and evaluating performance. The first phase of assessment
would involve answering a written test based on standard questions
(as the discussion progresses, it should become clear that the questions
should be designed to be more than straightforward application of
concepts, they should gauge the level of comprehension of the topics).
The student can use any resource possible including the internet,
talking to classmates, professors (including the instructor of the
concerned course), books, notes, etc. The students can even leave
the examination room and take the test at any location they deem conducive
for them to answer the written part. They can also be provided ample
time as they deem necessary (in the versions of this test we have
conduced, we gave more than one week for some courses).

\noindent The first phase will contribute only partially towards the
total score. The second phase of the test, which makes up the rest
of the total score, involves the student trying to convince the teacher
how well they have understood the answers they have submitted in writing.
The weighting scheme that combines the first phase and second phase
towards an overall score can even be done based on the discussion
in section \ref{subsec:Risk-Management-of}.

\noindent An often omitted criteria that needs to be considered when
administering the TTL is the ability, or, the level of skill of the
person conducting the test. This gives us another tool for examining
the ability of teachers, which becomes the Turing Test for Teaching
(TTT). Surely, different individuals are satisfied with different
levels of impersonation. When we see any drama, play or movie that
depicts the life of any real person, (while reminding ourselves that
movies might not be real, but real life can become movies); different
people are satisfied with different levels of acting ability. We all
know that the person playing the role in the theatrical version is
not the same, as the person, that is being enacted. But in many cases,
(perhaps, in most cases, when it is well produced), we leave feeling
satisfied with the result of the replication. The lesson for us here
is this: how far does the test administrator need to go, to believe
that the student perfectly understands the subjects being discussed.

\noindent To obtain a lower bound for passing the test, we reason
as follows: our benchmark for success (section \ref{sec:Infinite-Progress-Benchmark})
implies that the benchmark for learning has been surpassed if at-least
one concept has been mastered and the corresponding questions are
answered to the satisfaction of the person administering the TTL.
In this context, answering a question and the relevant decision making
involved are the lessons learnt, at display to satisfy the success
threshold. In an analogous manner, someone can be deemed to have succeed
in the TTT if they have ensured that any student has comprehended
one additional concept compared to when the student started learning
under this instructor.
\end{doublespace}
\begin{doublespace}

\section{\label{sec:The-Crux-of}The Crux of the Curve}
\end{doublespace}

\begin{doublespace}
\noindent While it is tempting to feel contented, with any new innovation,
or, solution that comes up for any problem; the real responsibility
lies in distilling down to the heart of the matter, which in this
case, are the bells and whistles we have attached to the the act of
transferring knowledge. We need to see beyond the artificial ornaments
that we confer upon the torch bearers of knowledge in our later generations.
These are nothing but certificates, diplomas, awards and rankings
of the institutions from which these artifacts have been gathered
that adorn all of us covering the true beauty within each of us. 

\noindent There is a huge literature on the signalling effect of grades
(Grant 2007; Liu \& Neilson 2011; Arkes 1999; Bills 2003). A point
to be noted is that the industry is less critical of lower grades
while higher education institutions are more stringent in checking
the grades of incoming student applications. If we need to rely on
the signals given by others and are unable to apply our independent
thinking to make decisions, we need to question why everyone needs
an education. It is worth highlighting that any recognition we give
to anyone, or, any form of respect for anyone is well placed. But
we should not let these adornments distract us from the luminous brilliance
that can burn brightly from each of us, if we are given the right
spark that can ignite it and a conducive environment that can nurture
it. If we feel that it is tedious to understand each person that we
meet and use our interactions to gauge how to give flight to everyone's
full potential, then perhaps, we all need more training on how to
be better social beings (not just human beings?), than the time we
spend reading textbooks and gathering decorations that merely serve
to act as diversions.

\noindent This is the appreciation we need to give all our fellow
beings and is our primary duty not just to others, but to ourselves,
since we are all creations of the same mother, evolution (Darwin 1859;
Dawkins 1976), that has nourished us for millions of years through
countless trials and errors, making us perfect in every way possible,
tirelessly, cycle upon cycle. Everything we do is because of this
evolutionary training we have received for millions of years (this
hints at an answer to the question: what we should do in terms of
life goals and career choices; perhaps, we should pursue what interests
us rather than what we seem to be good at; simply because with such
extensive preparation we can be good at everything and accomplish
whatever we aspire to do). The information from these historical lessons
are coded and passed on through our genes so that there is improvement
with every generation (International Human Genome Sequencing Consortium
2001; Venter, Adams, Myers, ... \& Gocayne 2001). Again we need to
be reminded that perfection is a moving target. We have perfection
only till the next shortcoming is spotted and then the tireless cycles
that mother nature spins will continue with improvements to adapt
to the new challenges.

\noindent Needless to say, the introduction of any solution, is fraught
to bring new problems of their own, or unintended consequences, if
you will. Periodic revisions of any policy and ensuring that organizations
are not too complex, (maybe smaller would suffice, to keep it less
complex, in many cases?) might be some ways to minimize the unintended
consequences, within the uncertain world, we live, trade and grade
in.
\end{doublespace}
\begin{doublespace}

\section{Conclusion}
\end{doublespace}

\begin{doublespace}
\noindent We have developed a framework and a solution to use the
financial markets to facilitate trades on grades. Students that have
good scores or higher grades, can trade their grade with someone that
might end up losing their scholarships, or other opportunities, because
of missing out the grades they need. In return, the person who is
giving up the higher grades, will get two things: 1) They earn the
good will and friendship of the person who would have lost out a much
needed source of money. 2) They can get a certain percentage of the
funds that might have been lost without the trade.

\noindent We have considered the moral, ethical, legal, humanitarian
constraints and considerations involved in managing such a grading
/ trading system. The other unintentional outcomes, (due to having
man-made furnishings labeling us), that have arisen and distracted
us from properly evaluating, the true splendor of all of nature's
creations, were pointed out. We have barely stopped discriminating
based on race, religion, color, other dividers; we have found new
ways to differentiate, such as the ranking of the schools we have
attended, the grades we obtain, the advanced degrees we have, the
titles we confer upon ourselves both within the corporate culture
or outside, the neighborhoods we live in, and so on.

\noindent The current set up of rankings and the best ranked institutions
attracting the best minds and resources, become self-fulfilling prophecies
that segment society, which is one of the primary outcomes that education
seeks to eradicate (if it does not, then perhaps, it should; Kashyap
2017c has a discussion of one unintended yet welcome consequence of
trying to understand the world, also known as knowledge creation or
research, which is that, we might end up understanding one another
better, becoming more tolerant in the process). Excessive competition
in the process of transferring knowledge dilutes the essence of collaboration,
the inculcation of which, among all us, is the end game, or, the overall
goal of education. Surely, if we need to produce better citizens,
who are willing to lend a helping hand, to their fellow human beings
who are in need, we need to learn to co-operate more than we compete,
since we have plenty of sports, for those of us, who have minds (and
bodies?), made for competitions. On this note, if we start of think
of grades as games, then that might solve the problem too, and we
could use a reminder that, it does not matter who wins, or, loses,
as long as the games go on.

\noindent Our thought experiment becomes another example of ideas
from diverse fields being applied to solve problems in a particular
field illustrating that the many disciplines we have created are artificial.
In this case, finance concepts and the principles of trading were
utilized to solve a problem in education and deeper issues for why
problems such as these arise were discussed. It should be clear that
this experiment applies only to higher education institutions, though
we have not been specific about what higher education means. That
being said, if someone is old enough to elect the head of their nation,
they can be deemed old enough to trade their grade. Though, if we
want allow toddlers in kinder-garden or kids in primary school to
trade grades, we should be more worried about whether the labelling
business has started at such an early age. We will leave those cans
of worms unopened for now.

\noindent The creation of a grade swap market is purely hypothetical.
The ramifications of actually embarking upon such an endeavor are
many, and such an effort is bound to be met with numerous challenges.
A simple suggestion that the thought process in this article brings
about is not to have grades or rankings determine the outcomes in
any selection criteria except have a cutoff point: either for jobs,
higher studies, financial scholarships, etc unless those rankings
are used only for entertainment such as in reality shows or spectator
sports. Instead the one use for grades or rankings might be that they
are helpful to let students or recipients know, what are their areas
of weakness and what improvements might be necessary. All trophies
that we collect (including certificates, grades, medals etc.) should
really be viewed as personal equity or private equity (borrowing another
widely used term in finance). This issue needs to be thought about
by many, and tackled by everyone, before a satisfactory solution can
be put in place.
\end{doublespace}
\begin{doublespace}

\section{Acknowledgments and End-notes}
\end{doublespace}
\begin{enumerate}
\begin{doublespace}
\item Dr. Yong Wang, Dr. Isabel Yan, Dr. Vikas Kakkar, Dr. Fred Kwan, Dr.
William Case, Dr. Srikant Marakani, Dr. Qiang Zhang, Dr. Costel Andonie,
Dr. Jeff Hong, Dr. Guangwu Liu, Dr. Humphrey Tung and Dr. Xu Han at
the City University of Hong Kong provided advice and more importantly
encouragement to explore and where possible apply cross disciplinary
techniques. The students of SolBridge International School of Business
provided the inspiration for this paper and many others. The faculty
members of SolBridge International School of Business, particularly
Dr. Jay-Won Lee, Dr. Kyun-Hwa Kim, Dr. Andrew Isaak and Dr. Hsing
Chia Huang provided patient guidance and valuable suggestions on how
to further this solution and most importantly about the practical
implications of even considering such an approach.
\item \label{Student}A student is primarily a person enrolled in a school
or other educational institution who attends classes in a course to
attain the appropriate level of mastery of a subject under the guidance
of an instructor. Alternately, a student is anyone who applies themselves
to the intensive intellectual engagement with some matter necessary
to master it as part of some practical affair in which such mastery
is basic or decisive. \href{https://en.wikipedia.org/wiki/Student}{Student, Wikipedia Link}
\item \label{Student Financial Aid}Student financial aid is financial support
given to individuals who are furthering their education. Student financial
aid can come in a number of forms, including scholarships, grants,
student loans, and work study programs. Each of these methods of providing
financial support to students has its advantages and drawbacks. \href{https://en.wikipedia.org/wiki/Student_financial_aid}{Student Financial Aid, Wikipedia Link}
\item \label{enu:Grade-inflation-is}Grade inflation is used in two senses:
(1) grading leniency: the awarding of higher grades than students
deserve, which yields a higher average grade given to students (2)
the tendency to award progressively higher academic grades for work
that would have received lower grades in the past. The majority of
the current debates on grade inflation are about the second issue:
\href{https://en.wikipedia.org/wiki/Grade_inflation}{Grade Inflation, Wikipedia Link}
\item \label{enu:An-intelligence-quotient}An intelligence quotient (IQ)
is a total score derived from several standardized tests designed
to assess human intelligence. Scores from intelligence tests are estimates
of intelligence. Unlike, for example, distance and mass, a concrete
measure of intelligence cannot be achieved given the abstract nature
of the concept of \textquotedbl intelligence\textquotedbl{} (End-note
\ref{enu:Intelligence-has-been}). \href{https://en.wikipedia.org/wiki/Intelligence_quotient}{Intelligence Quotient, Wikipedia Link}
(Weinberg 1989; Bartholomew 2004) describe the status of controversies
regarding the definition of intelligence, whether intelligence exists
and, if it does, whether it can be measured, and the relative roles
of genes versus environments in the development of individual differences
in intelligence.
\item \label{enu:Intelligence-has-been}Intelligence has been defined in
many ways, including: the capacity for logic, understanding, self-awareness,
learning, emotional knowledge, reasoning, planning, creativity, and
problem solving. More generally, it can be described as the ability
to perceive or infer information, and to retain it as knowledge to
be applied towards adaptive behaviors within an environment or context.
\href{https://en.wikipedia.org/wiki/Intelligence}{Intelligence, Wikipedia Link}
(Kashyap 2017d) provides an alternate definition of intelligence and
has a deeper discussion on whether intelligence can be increased or
even created in non-human entities.
\item \label{Grade Curve}In education, marking on a curve or grading on
a curve (also referred to as curved grading, bell curving, or using
grading curves) is a method of assigning grades to the students in
a class in such a way as to obtain a pre-specified distribution of
these grades, such as a normal distribution (also called Gaussian
distribution; Rao 1973; End-note \ref{Normal Distribution}). The
term \textquotedbl curve\textquotedbl{} refers to the bell curve,
the graphical representation of the probability density of the normal
distribution, but this method can be used to achieve any desired distribution
of the grades – for example, a uniform distribution (Rao 1973; End-note
\ref{Uniform Distribution}). \href{https://en.wikipedia.org/wiki/Grading_on_a_curve}{Grading on a Curve, Wikipedia Link}
\end{doublespace}

\begin{doublespace}
\noindent For example, if there are five grades in a particular university
course, A, B, C, D, and F, where A is reserved for the top 20\% of
students, B for the next 30\%, C for the next 30\%-40\%, and D or
F for the remaining 10\%-20\%, then scores in the percentile interval
from 0\% to 20\% will receive a grade of D or F, scores from 21\%
to 50\% will receive a grade of C, scores from 51\% to 80\% receive
a grade of B, and scores from 81\% to 100\% will achieve a grade of
A.
\end{doublespace}
\begin{doublespace}
\item \label{Normal Distribution}In probability theory, the normal (or
Gaussian or Gauss or Laplace–Gauss) distribution is a very common
continuous probability distribution. The normal distribution is sometimes
informally called the bell curve. However, many other distributions
are bell-shaped (such as the Cauchy, Student's t, and logistic distributions).
\href{https://en.wikipedia.org/wiki/Normal_distribution}{Normal Distribution, Wikipedia Link}.
The probability density of the normal distribution is:
\begin{equation}
{\displaystyle f(x\mid\mu,\sigma^{2})={\frac{1}{\sqrt{2\pi\sigma^{2}}}}e^{-{\frac{(x-\mu)^{2}}{2\sigma^{2}}}}}
\end{equation}
where $\mu$ is the mean or expectation of the distribution (and also
its median and mode),$\sigma$ is the standard deviation, and $\sigma^{2}$
is the variance.
\item \label{Uniform Distribution}In probability theory and statistics,
the discrete uniform distribution is a symmetric probability distribution
whereby a finite number of values are equally likely to be observed;
every one of $n$ values has equal probability $1/n$. Another way
of saying \textquotedbl discrete uniform distribution\textquotedbl{}
would be \textquotedbl a known, finite number of outcomes equally
likely to happen\textquotedbl .
\item \label{enu:Taleb and Kahneman discuss Trial and Error / IQ Points}
(Ismail 2014) mentions the following quote from Taleb, “Knowledge
gives you a little bit of an edge, but tinkering (trial and error)
is the equivalent of 1,000 IQ points. It is tinkering that allowed
the industrial revolution''. \href{https://www.youtube.com/watch?v=MMBclvY_EMA}{Nassim Taleb and Daniel Kahneman discuss Trial and Error / IQ Points, among other things, at the New York Public Library on Feb 5, 2013.}
\item \label{Grade-Trade-Market}The creation of a grade swap market is
only a thought experiment and is purely hypothetical. The views and
opinions expressed in this article, along with any mistakes, are mine
alone and do not necessarily reflect the official policy or position
of either of my affiliations or any other agency.
\item \label{Dark Pool}In finance, a dark pool (also black pool) is a private
forum for trading securities, derivatives, and other financial instruments.
One of the main advantages for institutional investors in using dark
pools is for buying or selling large blocks of securities without
showing their hand to others and thus avoiding market impact as neither
the size of the trade nor the identity are revealed until some time
after the trade is filled. However, it also means that some market
participants are disadvantaged as they cannot see the orders before
they are executed; prices are agreed upon by participants in the dark
pools, so the market is no longer transparent. \href{https://en.wikipedia.org/wiki/Dark_pool}{Dark Pool, Wikipedia Link}
\item \label{Knowledge Deities}A knowledge deity is a deity in mythology
associated with knowledge, wisdom, or intelligence. The following
link has a list of such deities. \href{https://en.wikipedia.org/wiki/List_of_knowledge_deities}{Knowledge Deities, Wikipedia Link}
\item \label{Knowledge}Knowledge is a familiarity, awareness, or understanding
of someone or something, such as facts, information, descriptions,
or skills, which is acquired through experience or education by perceiving,
discovering, or learning. \href{https://en.wikipedia.org/wiki/Knowledge}{Knowledge, Wikipedia Link}
\item \label{Epistemology}Epistemology (from Greek, epistēmē, meaning 'knowledge',
and logos, meaning 'logical discourse') is the branch of philosophy
concerned with the theory of knowledge. It is the study of the nature
of knowledge, justification, and the rationality of belief. \href{https://en.wikipedia.org/wiki/Epistemology}{Epistemology, Wikipedia Link}
\item \label{Friend Indeed}The idea that false friends will flake and true
friends will reveal themselves as such in times of adversity is ancient.
\href{https://en.wiktionary.org/wiki/a_friend_in_need_is_a_friend_indeed}{Friend Indeed, Wiktionary Link}
\item \label{Zuckberg 99=000025, Human Potential Facebook Link}A letter
to our daughter, Mark Zuckerberg, Wednesday December 2, 2015: in many
ways the world is getting better ... your life should be dramatically
better than ours today. We will do our part to make this happen, not
only because we love you, but also because we have a moral responsibility
to all children in the next generation. ... We will give 99\% of our
Facebook shares -{}- currently about \$45 billion -{}- during our
lives to advance this mission. We know this is a small contribution
compared to all the resources and talents of those already working
on these issues. But we want to do what we can, working alongside
many others. \href{https://www.facebook.com/notes/mark-zuckerberg/a-letter-to-our-daughter/10153375081581634/}{Zuckberg 99\%, Human Potential Facebook Link};
\href{http://philanthropynewsdigest.org/news/zuckerberg-chan-pledge-fortune-to-advance-human-potential-equality}{Zuckerberg 99\%, Human Potential}
\item \label{Zuckerberg 99=000025 News Links}Dec. 1, 2015, San Francisco:
Mark Zuckerberg, the co-founder and chief executive of Facebook, announced
on Tuesday that he and his wife would give 99 percent of their Facebook
shares “during our lives” — holdings currently worth more than \$45
billion — to charitable purposes. The pledge was made in an open letter
to their newborn daughter, Max, who was born about a week ago. \href{https://www.nytimes.com/2015/12/02/technology/mark-zuckerberg-facebook-charity.html}{Zuckerberg 99\%, NY Times Link};
\href{http://www.bbc.com/news/world-us-canada-34978249}{Zuckerberg 99\%, BBC Link};
\href{http://fortune.com/2015/12/02/zuckerberg-charity/}{Zuckerberg 99\%, Fortune Link}
\item \label{Zuckerberg 99=000025 in Legal Entity}Mark Zuckerberg had said
that he was putting 99 per cent of his Facebook shares into the Chan
Zuckerberg initiative, a legal entity dedicated to \textquotedbl advancing
human potential\textquotedbl{} through personalised learning, curing
disease, connecting people and building strong communities. The Independent
and other media reported that this vehicle is not a charity, but a
limited liability company owned and controlled by Zuckerberg. As a
private company, the Chan Zuckerberg Initiative can spend its money
on whatever it wants - including private, profit-generating investment.
\href{http://www.independent.co.uk/news/business/news/zuckerberg-explains-why-he-is-not-giving-99-of-his-facebook-shares-to-charity-a6760291.html}{Zuckerberg 99\% in Legal Entity}
\item \label{enu:The-reason-Rationalizations}The reason we provide these
rationalizations are simply because most academic journals seem to
require complicated mathematics, {[}see, Kashyap 2017a, for a discussion
of whether mathematics is incomprehensibly difficult, or, whether
it is beautifully simple and has been made tremendously convoluted,
unintentionally, of course{]}, enormous amounts of data crunching,
or, abstruse phrases explaining straightforward concepts, to evaluate,
and in many cases even to acknowledge, the contributions in any paper.
This is again an unintended consequence, (Kashyap 2017b, c have more
details), though these extra measures, act as a certain filtering
procedure for quality and are based on the very noble aspiration of
ensuring that the best knowledge bubbles to the surface. That being
said, it might be more efficient if the editors and reviewers watch
out for the most innovative (to be more precise; a certain minimum
level of innovation should suffice, given our subjective preferences
and the complexity that our decisions need to overcome, we might be
limited in our abilities to select the best works: Kashyap 2017c),
or new content and coach the authors on the steps required to create
a publication in their journal. This is illustrated with a simple
example: a hundred lessons on physics will surely be helpful; but
if we substitute one physics lesson for a lesson in biology, or, chemistry,
or, astronomy, that might be more enlightening, and lead to greater
productivity and impact at a later stage. Referring to our observation
in section \ref{sec:Good-Intentions,-Bad}, that the various fields
are artificial boundaries created by us, leads to the simple conclusion
that, learning about a diverse variety of concepts, which are likely
to be dispersed in multiple fields in our present approach to organizing
and enhancing knowledge, and weaving them together to form solutions,
would have a better chance of success in this world we live in.
\item \label{Net Present Value}In finance, the net present value (NPV)
or net present worth (NPW) is the summation of the present (now) value
of a series of present and future cash flows. Because NPV accounts
for the time value of money NPV provides a method for evaluating and
comparing products with cash flows spread over many years, as in loans,
investments, payouts from insurance contracts plus many other applications.
\href{https://en.wikipedia.org/wiki/Net_present_value}{Net Present Value, Wikipedia Link}
\item \label{List of Securities Examinations}The following link has a list
of securities examinations and the organizations that offer them.
\href{https://en.wikipedia.org/wiki/List_of_securities_examinations}{List of Securities Examinations, Wikipedia Link}
\item \label{Interest Rate Swap}In finance, an interest rate swap (IRS)
is an interest rate derivative (IRD). It involves exchange of interest
rates between two parties. In particular it is a linear IRD and one
of the most liquid, benchmark products. An interest rate swap's (IRS's)
effective description is a derivative contract, agreed between two
counterparties, which specifies the nature of an exchange of payments
benchmarked against an interest rate index. \href{https://en.wikipedia.org/wiki/Interest_rate_swap}{Interest Rate Swap, Wikipedia Link}
\end{doublespace}
\item \label{Division-by-Zero}Although division by zero is not defined
for real numbers, limits involving division by a real quantity $x$
which approaches zero may in fact be well-defined. For example, 
\[
\underset{x\rightarrow0}{\lim}\frac{\sin x}{x}=1
\]

Of course, such limits may also approach infinity, 
\[
\underset{x\rightarrow0^{+}}{\lim}\frac{1}{x}=\infty
\]
For a detailed discussion, see: \href{http://mathworld.wolfram.com/DivisionbyZero.html}{Division by Zero, Mathworld Link};
\href{https://en.wikipedia.org/wiki/Division_by_zero}{Division by Zero, Wikipedia Link}.
\begin{doublespace}
\item \label{enu:The-Turing-test}The Turing test, developed by Alan Turing
in 1950, is a test of a machine's ability to exhibit intelligent behavior
equivalent to, or indistinguishable from that of a human. \href{https://en.wikipedia.org/wiki/Turing_test}{Turing Test, Wikipedia Link}
\end{doublespace}
\end{enumerate}
\begin{doublespace}

\section{References}
\end{doublespace}
\begin{enumerate}
\begin{doublespace}
\item Arkes, J. (1999). What do educational credentials signal and why do
employers value credentials?. Economics of Education Review, 18(1),
133-141.
\item Aron, J. (2016). The new philanthropists.
\item Aviles, C. B. (2001). Grading with norm-referenced or criterion-referenced
measurements: To curve or not to curve, that is the question. Social
Work Education, 20(5), 603-608.
\item Barron, K. E., \& Harackiewicz, J. M. (2003). Revisiting the benefits
of performance-approach goals in the college classroom: Exploring
the role of goals in advanced college courses. International Journal
of Educational Research, 39(4), 357-374.
\item Barnett, A. H., Blair, R. D., \& Kaserman, D. L. (1992). Improving
organ donation: compensation versus markets. Inquiry, 372-378.
\item Bartholomew, D. J. (2004). Measuring intelligence: Facts and fallacies.
Cambridge University Press.
\item Beardsley, T. (1995). For whom the bell curve really tolls. Scientific
American, 272(1), 14-17.
\item Becker, G. S., \& Elias, J. J. (2007). Introducing incentives in the
market for live and cadaveric organ donations. Journal of economic
perspectives, 21(3), 3-24.
\item Bell, J., Grekul, J., Lamba, N., Minas, C., \& Harrell, W. A. (1995).
The impact of cost on student helping behavior. The Journal of social
psychology, 135(1), 49-56.
\item Berndt, T. J. (2002). Friendship quality and social development. Current
directions in psychological science, 11(1), 7-10.
\item Betts, J. R., \& Grogger, J. (2003). The impact of grading standards
on student achievement, educational attainment, and entry-level earnings.
Economics of Education Review, 22(4), 343-352.
\item Bierman Jr, H., \& Smidt, S. (2012). The time value of money. In The
Capital Budgeting Decision, Ninth Edition (pp. 29-59). Routledge. 
\item Bills, D. B. (2003). Credentials, signals, and screens: Explaining
the relationship between schooling and job assignment. Review of educational
research, 73(4), 441-449.
\item Bresee, C. W. (1976). On “Grading on the Curve”. The Clearing House,
50(3), 108-110.
\item Broadie, M., \& Jain, A. (2008). Pricing and hedging volatility derivatives.
The Journal of Derivatives, 15(3), 7-24.
\item Brookhart, S. M., Guskey, T. R., Bowers, A. J., McMillan, J. H., Smith,
J. K., Smith, L. F., ... \& Welsh, M. E. (2016). A century of grading
research: Meaning and value in the most common educational measure.
Review of Educational Research, 86(4), 803-848.
\item Burtraw, D., Evans, D. A., Krupnick, A., Palmer, K., \& Toth, R. (2005).
Economics of Pollution Trading for SO2 and NO x. Annu. Rev. Environ.
Resour., 30, 253-289.
\item Calaprice, A. (2000). The expanded quotable Einstein. Princeton, NJ:
Princeton.
\item Carruthers, C. K., \& Özek, U. (2016). Losing HOPE: Financial aid
and the line between college and work. Economics of education review,
53, 1-15.
\item Chan, W., Hao, L., \& Suen, W. (2007). A signaling theory of grade
inflation. International Economic Review, 48(3), 1065-1090.
\item Chua, A. Y., Aricat, R., \& Goh, D. (2017, September). Message content
in the life of rumors: Comparing three rumor types. In Digital Information
Management (ICDIM), 2017 Twelfth International Conference on (pp.
263-268). IEEE.
\item Clark, J., \& McGoey, L. (2016). The black box warning on philanthrocapitalism.
The Lancet, 388(10059), 2457-2459.
\item Cochrane, J. H. (2009). Asset Pricing:(Revised Edition). Princeton
university press.
\item Cohen-Vogel, L., Ingle, W. K., Levine, A. A., \& Spence, M. (2008).
The “spread” of merit-based college aid: Politics, policy consortia,
and interstate competition. Educational Policy, 22(3), 339-362.
\item Cornwell, C., Mustard, D. B., \& Sridhar, D. J. (2006). The enrollment
effects of merit-based financial aid: Evidence from Georgia’s HOPE
program. Journal of Labor Economics, 24(4), 761-786.
\item Cureton, L. W. (1971). The history of grading practices. Measurement
in Education, 2(4), 1-8.
\item Daley, B., \& Green, B. (2014). Market signaling with grades. Journal
of Economic Theory, 151, 114-145.
\item Dancy, J., Sosa, E., \& Steup, M. (Eds.). (2009). A companion to epistemology.
John Wiley \& Sons.
\item Darbyshire, J. H. M. (2016). Pricing and Trading Interest Rate Derivatives:
A Practical Guide to Swaps.
\item Dawkins, R. (1976). The selfish gene. Oxford university press.
\item Darwin, C. (1859). On the Origin of Species by Means of Natural Selection
(PT. 1); Or, the Preservation of Favored Races in the Struggle for
Life. General Books LLC.
\item Demailly, D., \& Quirion, P. (2008). European Emission Trading Scheme
and competitiveness: A case study on the iron and steel industry.
Energy Economics, 30(4), 2009-2027.
\item DeRose, K. (2005). What is epistemology. A brief introduction to the
topic, 20.
\item Domowitz, I., Finkelshteyn, I., \& Yegerman, H. (2008). Cul de sacs
and highways: an optical tour of dark pool trading performance. The
Journal of Trading, 4(1), 16-22.
\item Drury, R. T., Belliveau, M. E., Kuhn, J. S., \& Bansal, S. (1998).
Pollution trading and environmental injustice: Los Angeles' failed
experiment in air quality policy. Duke Envtl. L. \& Pol'y F., 9, 231.
\item Durm, M. W. (1993, September). An A is not an A is not an A: A history
of grading. In The Educational Forum (Vol. 57, No. 3, pp. 294-297).
Taylor \& Francis Group.
\item Egan, M., Matvos, G., \& Seru, A. (2016). The market for financial
adviser misconduct (No. w22050). National Bureau of Economic Research.
\item Eiszler, C. F. (2002). College students' evaluations of teaching and
grade inflation. Research in Higher Education, 43(4), 483-501.
\item Erin, C. A., \& Harris, J. (2003). An ethical market in human organs.
Journal of Medical Ethics, 29(3), 137-138.
\item Fendler, L., \& Muzaffar, I. (2008). The history of the bell curve:
Sorting and the idea of normal. Educational Theory, 58(1), 63-82.
\item Figlio, D. N., \& Lucas, M. E. (2004). Do high grading standards affect
student performance?. Journal of Public Economics, 88(9-10), 1815-1834.
\item Figueroa, A. (2016). Science Is Epistemology. In Rules for Scientific
Research in Economics (pp. 1-14). Palgrave Macmillan, Cham.
\item Finkelstein, I. E. (1913). The marking system in theory and practice
(No. 10). Baltimore: Warwick \& York.
\item Fisher-Vanden, K., \& Olmstead, S. (2013). Moving pollution trading
from air to water: potential, problems, and prognosis. Journal of
Economic Perspectives, 27(1), 147-72.
\item Frank, R. H., \& Cook, P. J. (2010). The winner-take-all society:
Why the few at the top get so much more than the rest of us. Random
House.
\item Frank, R. H. (1999). Higher education: The ultimate winner-take-all
market?.
\item French, R. M. (2000). The Turing Test: the first 50 years. Trends
in cognitive sciences, 4(3), 115-122.
\item French, M. T., Homer, J. F., Popovici, I., \& Robins, P. K. (2015).
What you do in high school matters: High school GPA, educational attainment,
and labor market earnings as a young adult. Eastern Economic Journal,
41(3), 370-386.
\item Ganchev, K., Nevmyvaka, Y., Kearns, M., \& Vaughan, J. W. (2010).
Censored exploration and the dark pool problem. Communications of
the ACM, 53(5), 99-107.
\item Garcia‐del‐Barrio, P., \& Pujol, F. (2007). Hidden monopsony rents
in winner‐take‐all markets—sport and economic contribution of Spanish
soccer players. Managerial and Decision Economics, 28(1), 57-70.
\item Goetz, J. W., Tombs, J. W., \& Hampton, V. L. (2005). Easing college
students' transition into the financial planning profession. Financial
Services Review, 14(3), 231.
\item Grant, D. (2007). Grades as information. Economics of Education Review,
26(2), 201-214.
\item Grant, D., \& Green, W. B. (2013). Grades as incentives. Empirical
Economics, 44(3), 1563-1592.
\item Gross, J. P., Hossler, D., Ziskin, M., \& Berry, M. S. (2015). Institutional
merit-based aid and student departure: A longitudinal analysis. The
Review of Higher Education, 38(2), 221-250.
\item Guskey, T. R. (1994). Making the grade: What benefits students?. Educational
Leadership, 52(2), 14.
\item Guskey, T. R. (2011). Five obstacles to grading reform. Educational
Leadership, 69(3), 16.
\item Hallam, E. M. (1996). Gods and goddesses: a treasury of deities and
tales from world mythology. Pub Overstock Unlimited Inc.
\item Harackiewicz, J. M., Barron, K. E., Carter, S. M., Lehto, A. T., \&
Elliot, A. J. (1997). Predictors and consequences of achievement goals
in the college classroom: Maintaining interest and making the grade.
Journal of Personality and Social psychology, 73(6), 1284.
\item Henry, G. T., \& Rubenstein, R. (2002). Paying for grades: Impact
of merit‐based financial aid on educational quality. Journal of Policy
Analysis and Management, 21(1), 93-109.
\item Henry, G. T., Rubenstein, R., \& Bugler, D. T. (2004). Is HOPE enough?
Impacts of receiving and losing merit-based financial aid. Educational
Policy, 18(5), 686-709.
\item Hetherington, S. C. (2018). Knowledge puzzles: An introduction to
epistemology. Routledge.
\item Hull, J. C., \& Basu, S. (2016). Options, futures, and other derivatives.
Pearson Education India.
\item International Human Genome Sequencing Consortium. (2001). Initial
sequencing and analysis of the human genome. Nature, 409(6822), 860.
\item Ismail, S. (2014). Exponential Organizations: Why new organizations
are ten times better, faster, and cheaper than yours (and what to
do about it). Diversion Books.
\item Jacoby, R., \& Glauberman, N. (Eds.). (1995). The bell curve debate:
History, documents, opinions. New York: Times Books.
\item Javaheri, A., Wilmott, P., \& Haug, E. (2004). GARCH and Volatility
swaps. Quantitative Finance, 4(5), 589-595.
\item Johnson, V. E. (2006). Grade inflation: A crisis in college education.
Springer Science \& Business Media.
\item Johnstone, D. B. (2004). The economics and politics of cost sharing
in higher education: comparative perspectives. Economics of education
review, 23(4), 403-410.
\item Jones, E. B., \& Jackson, J. D. (1990). College grades and labor market
rewards. The Journal of Human Resources, 25(2), 253-266.
\item Kashyap, R. (2017a). Microstructure under the Microscope: Tools to
Survive and Thrive in The Age of (Too Much) Information. The Journal
of Trading, 12(2), 5-27.
\item Kashyap, R. (2017b). Fighting Uncertainty with Uncertainty: A Baby
Step. Theoretical Economics Letters, 7(5), 1431-1452.
\item Kashyap, R. (2017c). The Economics of Enlightenment: Time Value of
Knowledge and the Net Present Value (NPV) of Knowledge Machines. Working
Paper.
\item Kashyap, R. (2017d). Artificial Intelligence: A Child’s Play. Working
Paper.
\item Kat, H. M. (2001). Structured equity derivatives: the definitive guide
to exotic options and structured notes. John Wiley.
\item Kincheloe, J. L., Steinberg, S. R., \& Gresson, A. D. (1997). Measured
Lies: The Bell Curve Examined.
\item Kohn, A. (2002). The dangerous myth of grade inflation. The Chronicle
of Higher Education, 49(11), B7.
\item Kulick, G., \& Wright, R. (2008). The Impact of Grading on the Curve:
A Simulation Analysis. International Journal for the Scholarship of
Teaching and Learning, 2(2), n2.
\item Kvanvig, J. L. (2003). The value of knowledge and the pursuit of understanding.
Cambridge University Press.
\item Leeds, M. A., \& Kowalewski, S. (2001). Winner take all in the NFL:
The effect of the salary cap and free agency on the compensation of
skill position players. Journal of Sports Economics, 2(3), 244-256.
\item Liu, L., \& Neilson, W. S. (2011). High scores but low skills. Economics
of Education Review, 30(3), 507-516.
\item Loury, L. D., \& Garman, D. (1995). College selectivity and earnings.
Journal of labor Economics, 13(2), 289-308.
\item Ludvik, C. (2007). Sarasvatī, Riverine Goddess of Knowledge: From
the Manuscript-carrying Vī\d{n}ā-player to the Weapon-wielding Defender
of the Dharma (Vol. 27). Brill.
\item Ma, C., \& Schapira, M. (2017). The bell curve: Intelligence and class
structure in American life. Macat Library.
\item Mittal, H. (2008). Are You Playing in a Toxic Dark Pool?: A Guide
to Preventing Information Leakage. The Journal of Trading, 3(3), 20-33.
\item Monks, J. (2009). The impact of merit-based financial aid on college
enrollment: A field experiment. Economics of Education Review, 28(1),
99-106.
\item Murray, C., \& Herrnstein, R. (1994). The bell curve. Intelligence
and Class Structure in American Life, New York.
\item Natriello, G. (1987). The impact of evaluation processes on students.
Educational Psychologist, 22(2), 155-175.
\item Petters, A. O., \& Dong, X. (2016). The Time Value of Money. In An
Introduction to Mathematical Finance with Applications (pp. 13-82).
Springer, New York, NY.
\item Pintrich, P. R., \& De Groot, E. V. (1990). Motivational and self-regulated
learning components of classroom academic performance. Journal of
educational psychology, 82(1), 33.
\item Polloway, E. A., Epstein, M. H., Bursuck, W. D., Roderique, T. W.,
McConeghy, J. L., \& Jayanthi, M. (1994). Classroom grading: A national
survey of policies. Remedial and Special Education, 15(3), 162-170.
\item Pritchard, D. (2009). The value of knowledge. The Harvard Review of
Philosophy, 16(1), 86-103.
\item Pritchard, D. (2018). What is this thing called knowledge?. Routledge.
\item Reeves, D. B. (2001). If you hate standards, learn to love the bell
curve. Education Week, 20(39), 38.
\item Rao, C. R. (1973). Linear statistical inference and its applications.
New York: Wiley.
\item Ross, S. A., Westerfield, R. W., \& Jaffe, J. F. (2002). Corporate
Finance.
\item Rumberger, R. W., \& Thomas, S. L. (1993). The economic returns to
college major, quality and performance: A multilevel analysis of recent
graduates. Economics of Education Review, 12(1), 1-19.
\item Sabot, R., \& Wakeman-Linn, J. (1991). Grade inflation and course
choice. Journal of Economic Perspectives, 5(1), 159-170.
\item Schervish, P. G., Davis, S. A., Cosnotti, R. L., \& Rosplock, K. S.
(2016). Solving the Giving Pledge Bottleneck. The Journal of Wealth
Management, 19(1), 23.
\item Schinske, J., \& Tanner, K. (2014). Teaching more by grading less
(or differently). CBE-Life Sciences Education, 13(2), 159-166.
\item Schneider, J., \& Hutt, E. (2014). Making the grade: a history of
the A–F marking scheme. Journal of Curriculum Studies, 46(2), 201-224.
\item Sjoquist, D. L., \& Winters, J. V. (2015). State merit‐based financial
aid programs and college attainment. Journal of Regional Science,
55(3), 364-390.
\item Smallwood, M. L. (1969). An historical study of examinations and grading
systems in early American universities. Harvard University Press.
\item Stan, E. (2012). The role of grades in motivating students to learn.
Procedia-Social and Behavioral Sciences, 69, 1998-2003.
\item Sternberg, R. J. (1995). For whom the bell curve tolls: A review of
The Bell Curve. Psychological Science, 6(5), 257-261.
\item Stiggins, R. J., Frisbie, D. A., \& Griswold, P. A. (1989). Inside
high school grading practices: Building a research agenda. Educational
Measurement: Issues and Practice, 8(2), 5-14.
\item Taylor, G. R. (1971). The bell curve has an ominous ring. The Clearing
House: A Journal of Educational Strategies, Issues and Ideas, 46(2),
119-124.
\item Tomlinson, C. A. (2005). Grading and differentiation: paradox or good
practice?. Theory into practice, 44(3), 262-269.
\item Tuckman, B., \& Serrat, A. (2011). Fixed income securities: tools
for today's markets (Vol. 626). John Wiley \& Sons.
\item Turing, A. M. (1950). Computing machinery and intelligence. Mind,
59(236), 433-460.
\item Turner, P., \& Coulter, C. R. (2001). Dictionary of ancient deities
(p. 170). Oxford University Press.
\item Venter, J. C., Adams, M. D., Myers, E. W., Li, P. W., Mural, R. J.,
Sutton, G. G., ... \& Gocayne, J. D. (2001). The sequence of the human
genome. science, 291(5507), 1304-1351.
\item Warschauer, T. (2002). The role of universities in the development
of the personal financial planning profession. Financial Services
Review, 11(3), 201.
\item Weinberg, R. A. (1989). Intelligence and IQ: Landmark issues and great
debates. American Psychologist, 44(2), 98.
\item Whannel, G. (2005). Winner takes all: competition. In Understanding
television (pp. 114-126). Routledge.
\item Zangenehzadeh, H. (1988). Grade inflation: A way out. The Journal
of Economic Education, 19(3), 217-226.
\item Zeidner, M. (1992). Key facets of classroom grading: A comparison
of teacher and student perspectives. Contemporary Educational Psychology,
17(3), 224-243.
\item Zhang, L., \& Ness, E. C. (2010). Does state merit-based aid stem
brain drain?. Educational Evaluation and Policy Analysis, 32(2), 143-165.
\item Zhu, H. (2014). Do dark pools harm price discovery?. The Review of
Financial Studies, 27(3), 747-789.
\end{doublespace}
\end{enumerate}

\end{document}